\def\BibTeX{{\rm B\kern-.05em{\sc i\kern-.025em b}\kern-.08em
    T\kern-.1667em\lower.7ex\hbox{E}\kern-.125emX}}
\author{
\thanks{This work was supported by the ERC under the European Union Horizon 2020 research and innovation program (Agreement no. 670896).}
\IEEEauthorblockN{Omid~Esrafilian, Rajeev~Gangula, and David~Gesbert}
\IEEEauthorblockA{Communication Systems Department, EURECOM, Sophia Antipolis, France\\[0.1em]
Email:\{esrafili, gangula, gesbert\}@eurecom.fr}

}
\theoremstyle{definition}
\newtheorem{definition}{Definition}
\theoremstyle{theorem}
\theoremstyle{proposition}
\newtheorem{proposition}{Proposition}
\theoremstyle{lemma}
\newtheorem{lemma}{Lemma}
\newcommand{\newac}{\newacronym}
\newcommand{\ac}{\gls}
\begin{document}
\global\long\def\Tr{\text{T}}

\global\long\def\trace{\text{tr}}

\title{3D-Map Assisted UAV Trajectory Design Under Cellular Connectivity Constraints}
\maketitle
\begin{abstract}
The enabling of safe cellular controlled unmanned
aerial vehicle (UAV) beyond visual line of sight is expected to
open important future opportunities in the area of transportation,
goods delivery, and system monitoring. A key challenge in this
area lies in the design of trajectories which, while allowing the
completion of the UAV mission, can guarantee reliable cellular
connectivity all along the path. Previous approaches in this domain
have considered simplistic propagation model assumptions
(e.g. Line of Sight based) or more advanced models but with
computationally demanding optimized solutions. In this paper,
we propose a novel approach for trajectory design using a coverage map that
can be obtained with a combination of 3D map of the environment and radio
propagation models. Leveraging on the convexity of sub-regions within the coverage map,
we propose a low-complexity graph based algorithm which is shown to achieve quasi-optimal performance at a fraction of the computational cost of known optimal methods.
\end{abstract}

\section{Introduction}
Rapid innovation and technological disruption in manufacturing
low-cost and high-quality commercial unmanned aerial vehicles (UAVs) or drones has
opened up many business opportunities to address consumer applications such as
goods delivery services, passenger transport, aerial surveillance and inspection, rescue operations \cite{HayatYanmMuz}. With growing efforts from governments facilitating regulatory framework \cite{uSpace,Faa}, UAV market is projected to reach \$63.6 billion by 2025 \cite{market}.

Ensuring ultra-reliable and low latency links between 
UAVs and their ground control stations plays a pivotal role in making these
businesses a reality as many of the above mentioned application scenarios require UAVs to be autonomous or semi autonomous. Integrating UAVs into
ubiquitous existing or future cellular networks as user terminals and connecting them with base stations (BSs) offers simple and cost-effective solution to the UAV connectivity problem \cite{ZengLyuZhang}. 

In spite of the promising results demonstrating the feasibility of supporting UAVs in current cellular networks, several new challenges have been highlighted in supporting aerial users in current cellular networks, which are otherwise developed for terrestrial users \cite{Qcom,Eric,HayBetFak}.
In particular, interference and abrupt changes in signal strength (compared to terrestrial users) have been observed in aerial users as the BS antennas are typically tilted a little downwards (intended for terrestrial users), thus making the aerial users experience side lobes.

However, the inherent advantage offered by UAVs in terms of 3D mobility can be exploited to efficiently design UAV paths to avoid the
outage areas and exploit good channel conditions while not deviating too much away from the trajectories planned for original tasks. Motivated by this, several recent works have considered the problem of communication-aware trajectory design for cellular connected UAVs \cite{zhang2018cellular, BasVinPol,zhang2019radio,BulGue,zeng2019path,ChalSadBet}.
Specifically, the problem of finding an 
optimal path in the sense of a shortest path between a departing point and a given destination such that the UAV consistently gets a reliable connection from the cellular network has been considered in \cite{zhang2018cellular,BasVinPol,zhang2019radio,BulGue,zeng2019path}.
The works in \cite{zhang2018cellular,BulGue} have considered the problem of finding the shortest path under cellulr coverage constraints assuming that the UAV terminal experience line-of-sight (LoS) channels from the BSs at all times
independent of UAV and BS locations. Convex optimization and graph based approaches are used to optimize the trajectory. However, the chosen radio propagation model 
is not applicable in urban environments, where it is shown 
that air-to-ground channels exhibit switching from LoS and non-line-of-sight (NLoS)
conditions depending on the UAV and BS locations, where NLoS conditions are caused by signal blockage, reflection and diffraction caused by city buildings 
\cite{QiMcGTamNix,AlAitKanJam}.

To overcome the drawback arising from using simple LoS channel models in urban environments, the works in \cite{BasVinPol,zhang2019radio}
have utilized a radio map of the environment that carries very fine grain information about the channel gains from all BSs in the trajectory optimization.
While \cite{BasVinPol} considers only the altitude optimization of UAV, \cite{zhang2019radio} optimizes trajectory in 2D while considering a fixed altitude.
Both these works depend on discretizing the radio map of the overall flight region into finer grids and then use graph based algorithms to find the shortest path from the initial location to the destination.
The complexity and performance trade-off of the shortest path algorithm depend on the number of nodes in the constructed graph, which in turns depend on the grid resolution used in discretizing the radio map. Note that the radio maps are not available on fly but needs to be estimated offline by collecting lot of radio measurements from users in that environment \cite{ChenYanGes}.

Another approach to obtain realistic trajectories in
complex urban environments is to use learning approaches which are model free \cite{ChalSadBet,zeng2019path}. However, the drawback of such techniques is that they require relatively high number of learning episodes to obtain the desired results.

In this work, we consider the problem of finding shortest path between a starting location and a given destination such that a constant altitude flying UAV consistently gets a reliable quality of service (QoS) from the cellular network. Some of the key contributions of this work are

\begin{itemize}
\item Instead of considering radio map which contains rich information channel gains but not easy to model analytically, and generally is not available for any arbitrary areas, we use the 3D map of the city along with a segmented pathloss model to construct coverage maps which serve as a high-quality approximation to the radio maps while having an analytical structure.
\item Making use of the convexity of sub-regions within the coverage map, we prove that the optimal trajectory has a piecewise linear structure.
    \item By leveraging this optimal structure, we
    propose a low-complexity graph based shortest path algorithm that doesn't 
     require discretizing the entire coverage map.
\end{itemize}

\section{System Model}
We consider a cellular connected UAV that flies over
an urban area consisting of a number of city buildings for a duration of time $T$. The position of UAV at time $t \in [0, T]$ 
is denoted by ${\bf v}(t)=[x(t),y(t),h]^{\Tr}\in\mathbb{R}^{3}$,
where $h$ denotes the altitude of the UAV. For simplicity, the altitude of the UAV is set to a fixed value which is determined by the tallest building in the city to avoid the collision.
We assume that the UAV is equipped with a GPS receiver, hence
${\bf v}(t)$ is known.
The UAV is presumed to fly from a pre-determined initial position ${\bf{v}}_{\text{I}}$
at time $t=0$ and has to reach to a terminal location ${\bf{v}}_{\text{F}}$ by the end of the mission duration. The UAV flies at
a constant speed of, hence the UAV's trajectory ${\bf v}(t), t \in [0, T]$ can solely be determined by the path it takes.
During the mission the UAV needs to be remained connected to one of the 
$K$ outdoor static base stations (BS) which are randomly
scattered with uniform distribution over the city.
The $k$-th BS, $k\in[1,K]$, is located at
${\bf u}_{k}=[x_{k},y_{k}, h_g]^{\Tr}\in\mathbb{R}^{3}$
, where $h_g$ stands for the height of the BS and is assumed to be the same for all BSs \footnote{By no means this is an restriction and the results presented in this paper can be easily extended to the case with different BS heights.}. Moreover, we denote $\hat{\bf{u}}_k = [x_k,y_k,h]^{\Tr}, \,k\in[1,K]$ as the projections of the $k$-th BS locations on the 2D plane with the same altitude as the UAV.

\subsection{Communication Model}
We consider a cellular down-link scenario where the time varying signal-to-noise ratio (SNR) at the UAV from the $k$-th BS is given by

\begin{equation}\label{eq:snrModel}
    \rho_{k}({\bf{v}}(t))=\frac{P \gamma_{k,s}(t)}{\sigma^2},\, 0\le t\le T,
\end{equation}
where $P$ is the transmission power of the BS, $\gamma_{k,s}(t)$ is the channel gain between the $k$-th BS and the UAV flying at location ${\bf{v}}(t)$, $\sigma^2$ represents the noise power, and finally $s\in\left\{\text{LoS},\text{NLoS}\right\}$
emphasizes the strong dependence of the propagation conditions in
\ac{los} or \ac{nlos} scenarios\cite{ChenYanGes}.
The channel gain between the UAV and the $k$-th BS is modeled as\cite{ChenYanGes,ChenGesb}
\begin{equation}
\gamma_{k,s}(t)=\frac{\beta_{s}}{d_k(t)^{\alpha_{s}}}, \label{eq:CH_Model}
\end{equation}
where 
$$
d_k(t) = {\| {\bf{v}}(t)-{\bf u}_{k}\|}_2
$$
represents the distance between the $k$-th BS and the UAV.
Regarding the
\ac{los}/\ac{nlos} classification of the UAV-BS links, we leverage
the knowledge of a 3D city map. Based on such map,
we can predict \ac{los} (un)availability on any given UAV-BS link from a trivial geometry argument: For a given UAV
position, the BS is considered in \ac{los} to the UAV if
the straight line passing through the UAV’s and the BS's position lies higher than any buildings in between.

\subsection{Problem Formulation}
The problem of finding the 
shortest trajectory for the UAV between a predefined starting point ${\bf{v}}_{\text{I}}$ and a terminal point ${\bf{v}}_{\text{F}}$, while satisfying the  
minimum SNR $\Bar{\rho}$ during the mission 
\begin{equation}
    \min_{0\le t \le T}\,\max_{k\in[1,K]}\rho_k({\bf{v}}(t)) \ge \Bar{\rho}. \label{eq:SNR_THR_constraint}
\end{equation}
Since the UAV moves with a constant velocity, the trajectory optimization can be formulated as follows
\begin{subequations}\label{eq:TRJ_problem_Continuous}
\noindent 
\begin{align}
    \min_{T,\{{\bf{v}}(t),0\le t\le T\}} & \quad T\\
    \mbox{s.t.\ } & \quad \eqref{eq:SNR_THR_constraint},\\ 
    & \quad {\bf{v}}(0) = {\bf{v}}_{\text{I}}, \, {\bf{v}}(T) = {\bf{v}}_{\text{F}}.
\end{align}
\end{subequations}

\begin{figure}[t]
\begin{centering}
\includegraphics[width=0.7\columnwidth]{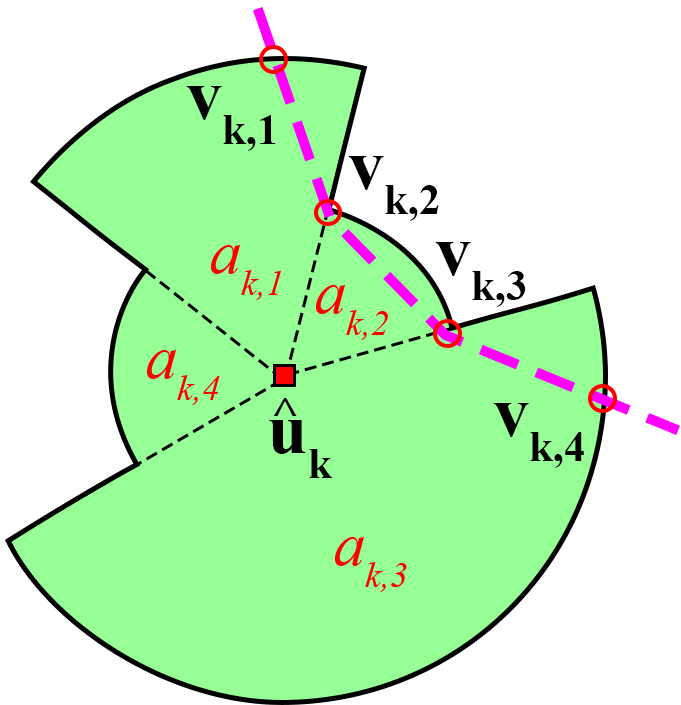}
\par\end{centering}
\caption{Coverage area of a given BS and the sectors. \label{fig:CVX_TRJ_problem}}
\end{figure}

This problem is not convex since the SNR in the constraint 
\eqref{eq:SNR_THR_constraint} is a non-differentiable and non-smooth function with respect to the UAV position due to the binary classification variable $s\in\{\text{LoS},\, \text{NLoS}\}$, therefore this function is neither convex nor concave. Moreover, it is a functional optimization, hence, it is challenging to solve 
\eqref{eq:TRJ_problem_Continuous} optimally in general.

In the following, with some analysis we show that the optimal trajectory has some structures which can be exploited to make the problem \eqref{eq:TRJ_problem_Continuous} more tractable. To this end, the following results and definitions are helpful.

\begin{definition}{\bf{Coverage area}:} \label{defi:def_1}
The coverage area of the BS is defined as a set of points with the same altitude as the UAV in which the SNR of the UAV-BS link will remain greater than or equal to $\bar{\rho}$. The coverage area of the $k$-th BS, $k\in[1,K]$ is defined as
\begin{equation}
    A_k = \{ {\bf{v}}=[x,y,h]^{\Tr}\in \mathbb{R}^3\,|\, \rho_k({\bf{v}})\ge \bar{\rho} \}.
\end{equation}

Using the SNR expression in \eqref{eq:snrModel},
the set of points $[x,y]$ that belong to the set $A_k$ can be written as
\begin{equation}\label{eq:distThold}
(x-x_k)^2+(y-x_k)^2 \leq d_s,
\end{equation}
where $d_s \triangleq
\left({\frac{P\beta_s}{\sigma^2 \Bar{\rho}}}\right)^{\frac{2}{\alpha_s}}-(h_g-h)^2$.
The radius $d_s$ therefore depends on whether the point ${\bf{v}}$ is in LoS or NLoS with respect to the BS, which in turn depends upon the building distribution around that BS. Based on \eqref{eq:distThold} and the 3D map, without loss
of generality, the coverage areas $A_k$ can be 
divided into $M_k$ sectors
\begin{equation}\label{eq:sub-region_def}
    A_k=\{a_{k,1} \cup \cdots \cup a_{k,M_k}\},
\end{equation}
where each $a_{k,i}$ is a convex shape which is a segment of a circle between two angles $\theta_{k,i}$ and $\theta_{k,i+1}$ with a radius of $r_{k,i}$.
The radius $r_{k,i}$ depends on the building distribution and \eqref{eq:distThold}.
For better understanding, an illustration of such coverage area of a BS is given in Fig.\ref{fig:CVX_TRJ_problem} and 
in Fig. \ref{fig:CoverageArea}.
For instance, regarding the coverage area depicted in Fig.\ref{fig:CVX_TRJ_problem} for a given BS, we can write $ A_k=\{a_{k,1} \cup a_{k,2} \cup a_{k,3} \cup a_{k,4}\}$.
\end{definition}

\begin{definition}{\bf{Coverage border}:} \label{defi:def_2}
The coverage border is the perimeter of a coverage area of a given base station. The coverage border of the $k$-th BS, $k\in[1,K]$ is denoted $B_k$.
\end{definition}

\begin{definition}{\bf{Common areas and common borders}:} \label{defi:def_3}
The common area between $k$-th and $j$-th BSs, $k,j\in[1,K], k\neq j$ represents the overlap regions of their coverage areas, i.e.,
\begin{equation}
    C_{j,k} = C_{k,j} = \left\{ A_k \cap A_j\right\}.
\end{equation}
The borders of the common areas $C_{j,k}$ is defined as the common borders which we denote by $D_{j,k}$.
\end{definition}

\begin{figure}[t]
\begin{centering}
\includegraphics[width=1\columnwidth]{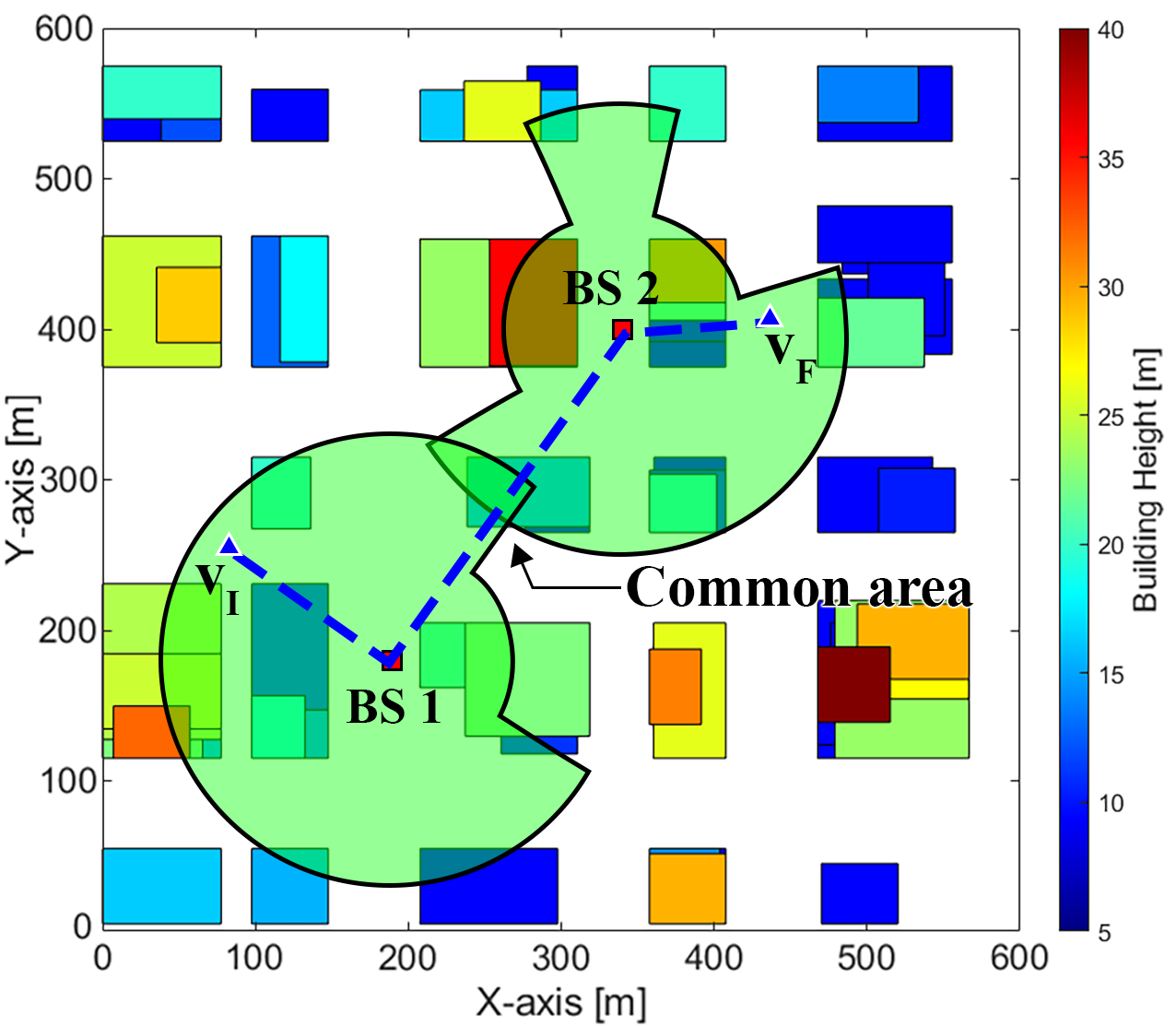}
\par\end{centering}
\caption{Top view of the city, the base stations positions, coverage area of each base station, and the common area. The UAV flies at 50m and the base stations are on the ground level. \label{fig:CoverageArea}}
\end{figure}

In Fig. \ref{fig:CoverageArea}, an example of the coverage areas, coverage borders, common areas, and common borders of two base stations is illustrated. The coverage area of each BS is depicted with a highlighted surfaces and the coverage borders are shown with solid black lines.

\begin{proposition}
Problem \eqref{eq:TRJ_problem_Continuous} is equivalent to the following problem:
\begin{subequations}\label{eq:TRJ_problem_original}
\noindent 
\begin{align}
    \min_{N,\mathcal{V}} & \quad \sum_{n\in [1,N-1]} \|{\bf{v}}_n - {\bf{v}}_{n+1}\|^2_2\\
    \mbox{s.t.\ } & \quad \rho({\bf{v}}_n,{\bf{v}}_{n+1}) \ge \Bar{\rho}\,,n\in[1,N-1],\label{eq:TRJ_problem_original_C1}\\ 
    & \quad {\bf{v}}_1 = {\bf{v}}_{\text{I}}, \, {\bf{v}}_N = {\bf{v}}_{\text{F}},\label{eq:TRJ_problem_original_C3}
\end{align}
\end{subequations}
where 
\begin{equation}
    \rho({\bf{x}},{\bf{y}}) = \min_{0\le \lambda \le 1}\,\max_{k\in[1,K]}\rho_k\left(\lambda{\bf{x}} + (1-\lambda){\bf{y}}\right),
\end{equation}
and $\mathcal{V}=({\bf{v}}_n)_{n=1}^{N}$ is the sequence of UAV trajectory points in $\mathbb{R}^3$ such that any two consecutive points are connected with a straight line.
\end{proposition}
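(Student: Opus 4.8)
The plan is to read the constraint \eqref{eq:SNR_THR_constraint} geometrically: $\min_{0\le t\le T}\max_{k}\rho_k({\bf v}(t))\ge\bar\rho$ holds if and only if ${\bf v}(t)\in\mathcal{A}:=\bigcup_{k=1}^{K}A_k$ for every $t$, so \eqref{eq:TRJ_problem_Continuous} is exactly the problem of finding a shortest path from ${\bf v}_{\text{I}}$ to ${\bf v}_{\text{F}}$ that stays inside the closed region $\mathcal{A}$; since the UAV flies at constant speed, $T$ is proportional to the arc length of the path, so ``minimize $T$'' is ``minimize arc length''. By Definition \ref{defi:def_1}, $\mathcal{A}$ is a \emph{finite union of convex sets} — the circular sectors $a_{k,i}$ — and, importantly, each such tile is the \emph{inside} of a disk (hence convex), never the outside, so the circular shape of the coverage borders causes no geodesic to bend along an arc. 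The proof then consists of (i) the geometric reformulation above, (ii) showing that ``the segment $[{\bf v}_n,{\bf v}_{n+1}]\subseteq\mathcal{A}$'' is equivalent to the constraint \eqref{eq:TRJ_problem_original_C1}, (iii) proving that an optimal continuous trajectory exists and can be taken to be a polyline with finitely many vertices, and (iv) matching the two objective values; (ii)+(iii)+(iv) give the two inequalities between the optimal values and the correspondence of optimal solutions.

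Step (ii) is immediate from the definition of $\rho$: the inequality $\rho({\bf x},{\bf y})=\min_{0\le\lambda\le1}\max_{k}\rho_k(\lambda{\bf x}+(1-\lambda){\bf y})\ge\bar\rho$ says that for every $\lambda$ the point $\lambda{\bf x}+(1-\lambda){\bf y}$ satisfies $\max_k\rho_k(\cdot)\ge\bar\rho$, i.e. lies in some $A_k$, which is precisely $[{\bf x},{\bf y}]\subseteq\mathcal{A}$. For the easy inequality, given any feasible $\mathcal{V}=({\bf v}_n)_{n=1}^{N}$ of \eqref{eq:TRJ_problem_original}, each segment lies in $\mathcal{A}$ by Step (ii), so their concatenation is a continuous path in $\mathcal{A}$ from ${\bf v}_{\text{I}}$ to ${\bf v}_{\text{F}}$; traversed at constant speed it is feasible for \eqref{eq:TRJ_problem_Continuous} with duration proportional to $\sum_n\|{\bf v}_n-{\bf v}_{n+1}\|_2$. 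Hence the optimum of \eqref{eq:TRJ_problem_Continuous} is no larger than that of \eqref{eq:TRJ_problem_original}. I would also record here that, whenever the problem is feasible, a minimizer of \eqref{eq:TRJ_problem_Continuous} exists: restricting to paths no longer than a fixed feasible one confines attention to a compact subset of $\mathcal{A}$, on which the unit-speed reparametrizations form a compact family (Arzel\`a--Ascoli) and the length functional is lower semicontinuous, so an optimum ${\bf v}^\star$ is attained.

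Step (iii) is the converse and the heart of the argument: the optimal path ${\bf v}^\star$ is a finite polyline. At any point of ${\bf v}^\star$ lying in the interior of $\mathcal{A}$ a whole ball is contained in $\mathcal{A}$, so if the path were not locally a straight segment there it could be replaced by the chord and strictly shortened — contradicting optimality. At a point of $\partial\mathcal{A}$ that belongs to the boundary of exactly one sector $a_{k,i}$, $\mathcal{A}$ coincides locally with that single convex sector, so the same chord argument again forbids a bend. The only points where ${\bf v}^\star$ can change direction therefore lie on the boundaries of at least two of the sectors $\{a_{k,i}\}$, and these form a \emph{finite} set — the pairwise intersections of the finitely many bounding arcs and radial chords. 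A local exchange argument at such a corner, using the convexity of the sectors meeting there, rules out the pathological possibilities of ${\bf v}^\star$ running along a boundary arc for a positive length or bending infinitely often near an accumulation point; consequently ${\bf v}^\star$ has finitely many vertices. Listing them as $({\bf v}_n)_{n=1}^{N}$ with ${\bf v}_1={\bf v}_{\text{I}}$, ${\bf v}_N={\bf v}_{\text{F}}$, every segment lies in $\mathcal{A}$ and hence satisfies $\rho({\bf v}_n,{\bf v}_{n+1})\ge\bar\rho$ by Step (ii), so $({\bf v}_n)$ is feasible for \eqref{eq:TRJ_problem_original} with the same length; thus the optimum of \eqref{eq:TRJ_problem_original} is no larger than that of \eqref{eq:TRJ_problem_Continuous}. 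Together with Step (ii)'s inequality this gives equality of optima and the claimed equivalence.

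The main obstacle is the finiteness/no-pathology part of Step (iii): turning ``a geodesic in a union of convex tiles bends only at tile corners'' into a rigorous statement for our circular-sector tiling. I would handle it by the sharper claim that every vertex of an optimal polyline sits at a corner of some $a_{k,i}$ (an endpoint of one of its radial edges, or an intersection of its bounding curves with those of another sector): at any candidate bend that is not such a corner, a neighborhood of it is covered by the finitely many convex sectors passing through it, and convexity lets one slide or shortcut the vertex so as to strictly decrease the length while keeping every segment inside $\mathcal{A}$ — a contradiction. This simultaneously yields the finite vertex set on which the paper's subsequent graph-based shortest-path algorithm is built.
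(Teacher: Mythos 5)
Your proof is correct and follows essentially the same route as the paper's: the decisive step in both is that each sector $a_{k,i}$ is convex, so the chord between the two points where the optimal continuous path meets a sector's boundary stays inside the coverage region and can replace the sub-path without increasing length, forcing the optimum to be a polyline whose segments satisfy $\rho(\cdot,\cdot)\ge\bar{\rho}$. You additionally supply the existence, finite-vertex, and two-sided-inequality arguments that the paper's sketch omits (and, like the paper, you implicitly read the objective as $\sum_{n}\|{\bf v}_n-{\bf v}_{n+1}\|_2$ rather than the squared norms written in the statement).
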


\begin{proof}
 We now provide a sketch of the proof. Let ${\bf{v}}^*(t), 0\le t \le T$ be the optimal trajectory which traverses the $k$-th BS's coverage area $A_k$. Without loss generality, let us assume that within coverage area $A_k$ the trajectory traverses the $n$-th sector. We denote the intersections of ${\bf{v}}^*(t)$ with the boarders of sector $a_{k,n}$ as points ${\bf{v}}_{k,n}, {\bf{v}}_{k,n+1}$. For instance in Fig. \ref{fig:CVX_TRJ_problem}, the optimal trajectory intersects the border of the sector $a_{k,1}$ in points ${\bf{v}}_{k,1}, {\bf{v}}_{k,2}$.
Since ${\bf{v}}_{k,n}, {\bf{v}}_{k,n+1}$ both are inside $a_{k,n}$ and each sector has a convex shape, then the straight line connecting ${\bf{v}}_{k,n}, {\bf{v}}_{k,n+1}$ also lies inside $a_{k,n}$, mathematically we can write
\begin{equation}
    \lambda{\bf{v}}_{k,n} + (1-\lambda){\bf{v}}_{k,n+1} \in a_{k,n.}, \,\forall \lambda, 0\le\lambda \le 1.\label{eq:cons_points_in_CVX_sub_reg}
\end{equation}
This implies that the constraint \eqref{eq:SNR_THR_constraint} is satisfied for any points on the straight line between ${\bf{v}}_{k,n}, {\bf{v}}_{k,n+1}$. Since, our objective is to minimize the travel time (or equivalently the length of the trajectory), then the optimal trajectory between ${\bf{v}}_{k,n}, {\bf{v}}_{k,n+1}$ is the straight line. Note that \eqref{eq:cons_points_in_CVX_sub_reg} can equivalently be written as 
\begin{equation}
   \rho({\bf{v}}_{k,n},{\bf{v}}_{k,n+1}) \ge \Bar{\rho}.
\end{equation}
 Consequently without loss of optimality, the optimal trajectory can be represented as a sequence of the points such that any two consecutive points are connected with a straight line 
\begin{equation}
         \mathcal{V}=({\bf{v}}_n)_{n=1}^{N}\,|\,\rho({\bf{v}}_n,{\bf{v}}_{n+1}) \ge \Bar{\rho},n\in [1,N-1].
\end{equation}
Hence, problem \eqref{eq:TRJ_problem_Continuous} is equivalent to \eqref{eq:TRJ_problem_original}. 
\end{proof}

 Then to solve \eqref{eq:TRJ_problem_original}, we just need to optimize over a limited number of optimization variables, however this problem is still difficult to solve since constraint \eqref{eq:TRJ_problem_original_C1} is neither convex nor concave. In what comes next, we develop a graph theory-based solution to this problem. First, we check the feasibility of problem \eqref{eq:TRJ_problem_original} by proposing a graph theory based approach
 in a similar manner to the one proposed in \cite{zhang2018cellular}.
 . We then derive a method to find a sub-optimal and efficient solution to problem \eqref{eq:TRJ_problem_original}.
\section{Feasibility check \label{sec:Feasibility}}
In this section, we investigate the feasibility of problem \eqref{eq:TRJ_problem_original} by leveraging the graph theory approach. A trajectory sequence $\mathcal{V}=({\bf{v}}_n)_{n=1}^{N}$ is a feasible solution to problem \eqref{eq:TRJ_problem_original} if constraints \eqref{eq:TRJ_problem_original_C1} is satisfied. In general, obtaining a feasible solution to problem \eqref{eq:TRJ_problem_original} is not trivial, since the coverage area of BSs have non-convex shapes and the exhaustive search inherently cannot be avoided. For further simplification, we uniformly discretize the coverage border of each BS, which was defined in Definition \ref{defi:def_2}, into $Q$ samples. The discretized coverage border of the $k$-th BS, $k\in[1,K]$ is denoted by $\hat{B}_k$, $|\hat{B}_k| = Q_k$, where $|.|$ is the cardinality function. We then define $\hat{D}_{k,j}$ as a set of the discrete points on the common boarders between $k$-th and $j$-th BSs, $k,j\in[1,K], k\neq j$ which is given by
\begin{equation}
    \hat{D}_{k,j} = {D}_{k,j} \cap \hat{B}_k \cap \hat{B}_j,
\end{equation}
where ${D}_{k,j}$ was defined in Definition \ref{defi:def_3}. We now propose a method to check the feasibility of the original problem by leveraging the graph theory approaches. Let's denote an undirected graph by $G=(\mathcal{N},
\mathcal{E})$. We define $\mathcal{N}$ as a set of graph's nodes which is given by $\mathcal{N} = \{{\bf{v}}_{\text{I}}\cup \,{\mathcal{U}}\cup \mathcal{D} \cup{\bf{v}}_{\text{F}} \}$,
where ${\mathcal{U}} = \{\hat{\bf{u}}_k,\,k\in[1,K]\}$ is a set comprising the projections of the BSs locations, and $\mathcal{D}$ is defined as
\begin{equation}
    \mathcal{D} = \bigcup_{k,j\in[1,K], k\neq j}\hat{D}_{k,j}.
\end{equation}The set of the graph's edges is denoted by $\mathcal{E}$ which is given by
\begin{equation}\label{eq:Feas_graph_edges}
\begin{aligned}
    \mathcal{E}&= \{(\hat{\bf{u}}_k,{\bf{v}}_{\text{I}}) |\,{\bf{v}}_{\text{I}}\in A_k, \,k\in[1,K] \}\\
    &\cup(\hat{\bf{u}}_k,{\bf{x}}_{k,j}) |\,\forall{\bf{x}}_{k,j}\in \hat{D}_{k,j}, \,k,j\in[1,K],\,k\neq j \}\\
    &\cup\{(\hat{\bf{u}}_k,{\bf{v}}_{\text{F}}) |\,{\bf{v}}_{\text{F}}\in A_k, \,k\in[1,K] \}.
    \end{aligned}
\end{equation}

We also assign a weight value to each edge of the graph corresponding to its length. 
Note that, the edge $({\bf{v}}_{\text{I}},\hat{\bf{u}}_k)$ exists if the starting point ${\bf{v}}_{\text{I}}$ lies in the coverage area of the $k$-th BS. Moreover,  $(\hat{\bf{u}}_k,{\bf{x}}_{k,j})$ represents  an edge between the $k$-th BS and all the points (${\bf{x}}_{k,j}$) in the discretized coverage borders with its neighbour BS $j$.

\begin{proposition}\label{prop:Coverage_under_feas_graph}
All the edges defined in \eqref{eq:Feas_graph_edges}
satisfy the constraint \eqref{eq:TRJ_problem_original_C1}.
\end{proposition}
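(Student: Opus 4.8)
The plan is to reduce the whole statement to one geometric property of the coverage areas, namely that each $A_k$ is \emph{star-shaped with respect to the BS projection} $\hat{\bf{u}}_k$. I would establish this first, directly from Definition~\ref{defi:def_1}: $A_k = a_{k,1}\cup\cdots\cup a_{k,M_k}$ is a union of circular sectors, each $a_{k,i}$ being the convex region bounded by the two rays issued from $\hat{\bf{u}}_k$ at angles $\theta_{k,i},\theta_{k,i+1}$ together with the arc of radius $r_{k,i}$. All of these sectors share the apex $\hat{\bf{u}}_k$; in particular $\hat{\bf{u}}_k\in A_k$. For any point ${\bf{p}}\in A_k$ there is an index $i$ with ${\bf{p}}\in a_{k,i}$, and since $a_{k,i}$ is a sector with apex $\hat{\bf{u}}_k$, the segment from $\hat{\bf{u}}_k$ to ${\bf{p}}$ has the same polar angle and a smaller radius, hence stays inside $a_{k,i}\subseteq A_k$; that is, $\lambda\hat{\bf{u}}_k+(1-\lambda){\bf{p}}\in A_k$ for all $\lambda\in[0,1]$.

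The second step is a short bridging observation: if the segment joining two points ${\bf{x}},{\bf{y}}$ lies entirely within a single coverage area $A_k$, then $\rho({\bf{x}},{\bf{y}})\ge\bar{\rho}$. This follows at once from the definitions, since every point ${\bf{p}}_\lambda=\lambda{\bf{x}}+(1-\lambda){\bf{y}}$ on the segment is in $A_k$, hence $\rho_k({\bf{p}}_\lambda)\ge\bar{\rho}$ by Definition~\ref{defi:def_1}, so $\max_{j\in[1,K]}\rho_j({\bf{p}}_\lambda)\ge\bar{\rho}$; taking the minimum over $\lambda\in[0,1]$ in the definition of $\rho(\cdot,\cdot)$ leaves the bound intact.

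With these two ingredients the three families of edges in \eqref{eq:Feas_graph_edges} are handled uniformly. An edge $(\hat{\bf{u}}_k,{\bf{v}}_{\text{I}})$ is present only when ${\bf{v}}_{\text{I}}\in A_k$, so by star-shapedness the segment $[\hat{\bf{u}}_k,{\bf{v}}_{\text{I}}]$ lies in $A_k$ and the bridging step gives $\rho(\hat{\bf{u}}_k,{\bf{v}}_{\text{I}})\ge\bar{\rho}$; the edges to ${\bf{v}}_{\text{F}}$ are identical. For an edge $(\hat{\bf{u}}_k,{\bf{x}}_{k,j})$ with ${\bf{x}}_{k,j}\in\hat{D}_{k,j}$, I would note that $\hat{D}_{k,j}=D_{k,j}\cap\hat{B}_k\cap\hat{B}_j\subseteq\hat{B}_k\subseteq B_k\subseteq A_k$, because the coverage border $B_k$ is contained in the (closed) coverage area $A_k$; hence ${\bf{x}}_{k,j}\in A_k$ and the same argument applies, also with the roles of $k$ and $j$ swapped. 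Collecting the three cases shows that every edge satisfies \eqref{eq:TRJ_problem_original_C1}.

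The one place that needs care, and which I regard as the main obstacle, is justifying the star-shapedness claim and the membership $\hat{\bf{u}}_k\in A_k$: these are really statements about the geometry encoded in Definition~\ref{defi:def_1}, so one must read the decomposition into ``segments of a circle between two angles with a radius'' as genuine circular sectors with common apex $\hat{\bf{u}}_k$, and one implicitly needs the region immediately around each BS to be covered, i.e. $d_{\mathrm{LoS}}\ge 0$ in \eqref{eq:distThold}, a mild assumption on the SNR threshold relative to the altitude gap. Once this is granted, everything else is bookkeeping over the three edge types.
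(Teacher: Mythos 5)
Your proof is correct and follows essentially the same route as the paper's: the paper likewise argues that $\hat{\bf{u}}_k$ and the common-border points lie in $A_k$ and that, because $A_k$ is a union of convex sectors, the segment from $\hat{\bf{u}}_k$ to any such point stays inside $A_k$ ``by construction'' --- which is exactly the star-shapedness property you make explicit. Your version is somewhat more careful, in particular in naming the bridging step from set containment to the $\rho(\cdot,\cdot)\ge\bar{\rho}$ constraint and in flagging the implicit assumption $\hat{\bf{u}}_k\in A_k$ (i.e.\ $d_{\mathrm{LoS}}\ge 0$), which the paper leaves unstated.
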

\begin{proof}
Without loss of generality consider $k$-th BS having an coverage area $A_k$.
By definition, we can see that $\hat{\bf{u}}_k,{\bf{x}}_{k,j}, k \neq j$ lie
inside $A_k$. Since the coverage area $A_k$ can be represented by
a union convex non-overlapping sectors as defined in \eqref{eq:sub-region_def},
by construction, there always exits a straight line path connecting
$\hat{\bf{u}}_k$ and ${\bf{x}}_{k,j}$ which always lies inside the coverage region
$A_k$. Therefore all edges $(\hat{\bf{u}}_k,{\bf{x}}_{k,j}), k \neq j$ satisfy the coverage constraint. Since, we assume that initial and terminal points of the
UAV are always in the coverage area of at least one BS, it can be easily see that
edges of the form $({\bf{v}}_{\text{I}},\hat{\bf{u}}_k)$ and $(\hat{\bf{u}}_k,{\bf{v}}_{\text{F}})$ also satisfy the constraint in \eqref{eq:TRJ_problem_original_C1}.
\end{proof}

Since all edges of the graph $G$ satisfy SNR feasibility constraint, the trajectory optimization problem  optimization problem \eqref{eq:TRJ_problem_original} is feasible if we can find a path from starting node ${\bf{v}}_{\text{I}}$ to the terminal node ${\bf{v}}_{\text{F}}$ in the graph $G$. To this end, we employ the Dijkstra \cite{cormen2009introduction} algorithm with the worst-case complexity of $\mathcal{O}(|\mathcal{E}|+|\mathcal{N}|\log|\mathcal{N}|)$ which obtains a shortest path between ${\bf{v}}_{\text{I}}$ and ${\bf{v}}_{\text{F}}$. We denote such a solution as the base trajectory $\mathcal{V}_b = ({\bf{v}}_n^b)_{n=1}^N$. Note that, if the algorithm cannot find a path between ${\bf{v}}_{\text{I}}$ and ${\bf{v}}_{\text{F}}$, problem \eqref{eq:TRJ_problem_original} is infeasible.

The base trajectory starts from the initial point ${\bf{v}}_{\text{I}}$ and it goes on top of the closest BS to the ${\bf{v}}_{\text{I}}$. The UAV then tries to reach to the terminal point by visiting the minimum number of the BSs. From one BS to another one the UAV crosses over a point inside the discretized common border of the two BSs. 

 An illustration of the  base trajectory between the starting point and the terminal point is shown in Fig. \ref{fig:CoverageArea}. For ease of exhibition we consider merely two BSs. It can be seen that, the base trajectory starts from ${\bf{v}}_{\text{I}}$ and heads towards the closest BS, which is the BS1 here, and then it goes to the neighbour base station by passing over the common borders between the BSs. Finally, the trajectory terminates by going from BS2 in a straight line towards ${\bf{v}}_{\text{F}}$.

We denote the base stations which are sequentially visited by the base trajectory as:
\begin{equation}
    \mathcal{U}^b = (\hat{\bf{u}}_k)\, | \, \hat{\bf{u}}_k\in \mathcal{V}_b.
\end{equation}
We also define an index set $I^b = (I_{b,1},\cdots,I_{b,K^{'}})$, where $I_{b,j}$ is the BS's index of the $j$-th element in $\mathcal{U}^b$, and $K^{'}=|\mathcal{U}^b|$. As an example, let's assume that the base trajectory visits the sequence of the BSs $\mathcal{U}^b=(\hat{\bf{u}}_1,\hat{\bf{u}}_3,\hat{\bf{u}}_4,\hat{\bf{u}}_7)$, then the index set $I^b$ is given by
\begin{equation}
    I^b =(1,3,4,7).
\end{equation}

As it is shown in Fig. \ref{fig:CoverageArea}, the base trajectory is not an efficient solution since the trajectory needs to fly over the BSs to reach to the terminal point. In the next section, we propose a method to improve the base trajectory.
\section{Trajectory Optimization}\label{sec:trajectory_Optimization}
In this section we aim to find a sub-optimal and high-quality approximate solution to \eqref{eq:TRJ_problem_original} by improving the base trajectory. As mentioned earlier, the base trajectory is not an efficient solution since it requires to visit the BSs to get to the terminal location. For example in Fig. \ref{fig:CoverageArea}, the optimal trajectory is a straight line from ${\bf{v}}_{\text{I}}$ to ${\bf{v}}_{\text{F}}$. To tackle this problem, in this section we aim improve the base trajectory obtained in Section \ref{sec:Feasibility}  by employing the graph theory methods.

We then construct an undirected graph $G=(\mathcal{N},
\mathcal{E})$. For ease of exposition we use the same notations as Section \ref{sec:Feasibility}. The nodes of the graph is defined as follows

\begin{equation}
    \mathcal{N} = \{{\bf{v}}_{\text{I}}\cup {\mathcal{U}}^b \cup {\mathcal{D}}^b \cup{\bf{v}}_{\text{F}} \},
\end{equation}
where ${\mathcal{D}}^b\subset\mathcal{D}$ which is defined as
\begin{equation}
    {\mathcal{D}}^b = \left\{ \bigcup _{j\in[1,K^{'}-1]} \hat{B}_{I_{b,j},I_{b,j+1}} \right\}.
\end{equation}
The edges of the graph are given by
\begin{equation}\label{eq:Graph_Opt_labels} \small
\begin{aligned}
    \mathcal{E}&= \{({\bf{v}}_{\text{I}},\hat{\bf{u}}_{I_{b,1}})\}\\
    &\cup\{({\bf{v}}_{\text{I}},{\bf{x}}_{1,2}) |L({\bf{v}}_{\text{I}},{\bf{x}}_{1,2})\in A_{I_{b,1}},\forall{\bf{x}}_{1,2}\in \hat{B}_{I_{b,1},I_{b,2}}\}\\
    &\cup\{({\bf{x}}_{k-1,k},{\bf{x}}_{k,k+1}) |L({\bf{x}}_{k-1,k},{\bf{x}}_{k,k+1})\in A_{I_{b,k}},\\ &\kern3pt\forall{\bf{x}}_{k-1,k}\in\hat{B}_{I_{b,k-1},I_{b,k}},\forall{\bf{x}}_{k,k+1}\in\hat{B}_{I_{b,k},I_{b,k+1}},k\in[2,K^{'}-1]\}\\
    &\cup\{({\hat{\bf{u}}}_k,{\bf{x}}_{k,j}) |\,\forall{\bf{x}}_{k,j} \in \hat{B}_{I_{b,k},I_{b,j}} ,\,k,j\in[1,K^{'}], k\neq j\}\\
    &\cup\{({\bf{v}}_{\text{F}},{\bf{x}}_{K^{'}-1,K^{'}}) |L({\bf{v}}_{\text{F}},{\bf{x}}_{K^{'}-1,K^{'}})\in A_{I_{b,K^{'}}},\\ &\kern18pt\forall{\bf{x}}_{K^{'}-1,K^{'}}\in \hat{B}_{I_{b,K^{'}-1},I_{b,K^{'}}}\}\\
    &\cup\{({\bf{v}}_{\text{F}},\hat{\bf{u}}_{I_{b,K^{'}}})\},
    \end{aligned}
\end{equation}
where $L({\bf{x}},{\bf{y}})$ is a line segment between two points ${\bf{x}},{\bf{y}}$ which is defined as follows:
\begin{equation}
    L({\bf{x}},{\bf{y}}) = \left\{\lambda{\bf{x}} + (1-\lambda){\bf{y}}, \forall{\lambda}, 0\le \lambda\le 1 \right\},
\end{equation}
We also assign a weight value to each edge of the graph corresponding to its length. All the edges $({\bf{v}}_{\text{I}},\hat{\bf{u}}_{I_{b,1}}),\,({\hat{\bf{u}}}_k,{\bf{x}}_{k,j}),\,({\bf{v}}_{\text{F}},\hat{\bf{u}}_{I_{b,K^{'}}})$ are defined in a similar manner to \eqref{eq:Feas_graph_edges}, and similar to Proposition \ref{prop:Coverage_under_feas_graph}, it can be shown that the constraint \eqref{eq:TRJ_problem_original_C1} is always satisfied for any of these edges. $({\bf{v}}_{\text{I}},{\bf{x}}_{1,2})$ is the edge between the initial location ${\bf{v}}_{\text{I}}$ and any points inside the discretized common borders of $I_{b,1}$-th and the $I_{b,2}$-th BS, and it exists if  this edge lies inside $A_{I_{b,1}}$. The edge $({\bf{v}}_{\text{F}},{\bf{x}}_{K^{'}-1,K^{'}})$ is also defined similarly. The edge $({\bf{x}}_{k-1,k},{\bf{x}}_{k,k+1})$ represents an edge between all the points in the discretized common borders of the $I_{b,k}$-th BS and it's neighbor BSs $I_{b,k-1},I_{b,k+1}$. Edge $({\bf{x}}_{k-1,k},{\bf{x}}_{k,k+1}) \in \mathcal{E}$, if the line $L({\bf{x}}_{k-1,k},{\bf{x}}_{k,k+1})$ lies inside $A_{I_{b,k}}$, which can be efficiently checked by the following result. 

\begin{lemma} \label{lemma:Graph_TRJ_lemma}
Let ${\bf{x}},{\bf{y}}\in A_k$, to determine if the line $L({\bf{x}},{\bf{y}})$ is inside coverage area $A_k$, only a limited number of points along $L({\bf{x}},{\bf{y}})$ need to be evaluated.
\end{lemma}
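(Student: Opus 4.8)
The plan is to exploit the explicit description of $A_k$ from \eqref{eq:sub-region_def} as a finite union of \emph{closed convex} circular sectors $a_{k,1},\dots,a_{k,M_k}$, all centred at the BS projection $\hat{\bf{u}}_k$. First I would note that the boundary of a single sector $a_{k,i}$ consists of one circular arc of radius $r_{k,i}$ centred at $\hat{\bf{u}}_k$ together with the two radial segments lying on the rays from $\hat{\bf{u}}_k$ at the delimiting angles $\theta_{k,i}$ and $\theta_{k,i+1}$. Consequently $\bigcup_{i=1}^{M_k}\partial a_{k,i}$ is contained in a family of at most $M_k$ concentric circles and at most $2M_k$ radial segments through $\hat{\bf{u}}_k$.

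Next I would bound the number of intersections of the segment $L({\bf{x}},{\bf{y}})$ with this set. The line supporting $L({\bf{x}},{\bf{y}})$ meets each circle in at most two points and each radial segment in at most one point, so in the non-degenerate case $L({\bf{x}},{\bf{y}})$ crosses $\bigcup_i\partial a_{k,i}$ in at most $4M_k$ points. Let $0=t_0<t_1<\dots<t_P<t_{P+1}=1$ be the corresponding parameter values, $P\le 4M_k$; they split $L({\bf{x}},{\bf{y}})$ into at most $4M_k+1$ relatively open sub-segments $S_\ell=\{\lambda{\bf{x}}+(1-\lambda){\bf{y}}:t_\ell<\lambda<t_{\ell+1}\}$, each disjoint from every $\partial a_{k,i}$.

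Then I would show that membership in $A_k$ is constant along each $S_\ell$. Since $S_\ell$ is connected and $S_\ell\cap\partial a_{k,i}=\emptyset$, the set $S_\ell\cap a_{k,i}$ coincides on $S_\ell$ with $S_\ell\cap\mathrm{int}\,a_{k,i}$ and is therefore both relatively open and relatively closed in $S_\ell$, hence equals $S_\ell$ or $\emptyset$; so $S_\ell\subseteq a_{k,i}$ for some $i$ — equivalently $S_\ell\subseteq A_k$ — if and only if the midpoint of $S_\ell$ lies in $A_k$. Because $A_k$ is a finite union of closed sets and hence closed, and $L({\bf{x}},{\bf{y}})=\bigcup_\ell\overline{S_\ell}$, I conclude that $L({\bf{x}},{\bf{y}})\subseteq A_k$ if and only if all of the at most $4M_k+1$ midpoints lie in $A_k$ (the endpoints ${\bf{x}},{\bf{y}}$ being in $A_k$ by hypothesis). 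Each breakpoint $t_\ell$ is obtained in closed form by intersecting the line with a circle (a quadratic) or with a ray (a linear equation), so the whole test reduces to evaluating $\rho_k$ at a bounded, $O(M_k)$ number of points, which is the claim.

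The step I expect to need the most care is the degenerate configuration in which the supporting line passes through $\hat{\bf{u}}_k$ and runs collinearly with one of the radial boundary segments: then the intersection with that segment is an interval rather than a point, and $L({\bf{x}},{\bf{y}})$ may lie along $\partial A_k$. This must be treated separately, e.g.\ by an infinitesimal perturbation of an endpoint (which does not change whether $L({\bf{x}},{\bf{y}})\subseteq A_k$, since $A_k$ is closed and the configuration is non-generic), or by observing that such a collinear overlap is contained in the two sectors flanking that radial segment and is hence already certified by the midpoint test on the neighbouring sub-segments. The tangency case — the line touching a circle in a single point — is harmless, as it only lowers the crossing count.
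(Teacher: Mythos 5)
Your proof is correct and follows essentially the same route as the paper's: both partition $L(\mathbf{x},\mathbf{y})$ at its finitely many crossings with the sector boundaries of $A_k$ and reduce the containment test to one evaluation per resulting sub-segment, using the convexity (in your case, closedness plus connectedness) of each sector $a_{k,i}$. Your version is in fact more complete than the paper's sketch, since it supplies the explicit $O(M_k)$ bound on the number of crossings, justifies why membership is constant on each piece, and treats the degenerate collinear and tangent configurations that the paper leaves implicit.
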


\begin{proof}

Let's assume that the line $L({\bf{x}},{\bf{y}})$ sequentially traverses some sectors in $A_k$, denoted by $(a_{k,1},\ldots, a_{k,N^{'}})$
with starting location ${\bf{x}} \in a_{k,1}$ and ending location  ${\bf{y}} \in a_{k,N^{'}}$.
The set of intersections of the line with the boundaries of the sectors
is denoted by a sequence of the points $({\bf{x}}_j)_{j=1}^{J}$.

Since all the sectors are convex, it can be shown that if
$\{{\bf{x}}_j,{\bf{x}}_{j+1}\},j\in[1,J]$ belong to a same sector then
the line $L({\bf{x}}_j,{\bf{x}}_{j+1})$ lies inside $A_k$.
Therefore, to check if the line $L({\bf{x}},{\bf{y}})$ is inside
the coverage area, it is enough to evaluate a limited number of points.
\end{proof}

 Having constructed graph $G$ using Lemma \ref{lemma:Graph_TRJ_lemma}, since any edges of the graph is covered by at least one base station then constraint \eqref{eq:TRJ_problem_original_C1} will always be satisfied if the UAV moves along any edges of the graph. So, problem \eqref{eq:TRJ_problem_original} is cast as finding a shortest path between ${\bf{v}}_{\text{I}},{\bf{v}}_{\text{F}}$ in graph $G$. Similar to Section \ref{sec:Feasibility}, we use the Dijkstra algorithm to find the shortest trajectory.
 
\section{Numerical Results\label{sec:Numerical-Results}} 

We consider a dense urban Manhattan-like area of size $2\times2\, \text{km}^2$, consisting of a regular street grid and buildings.
The building heights are Rayleigh distributed within 
the range of $5$ to $70$ (m) \cite{AlAitKanJam}. Propagation parameters for the UAV-BS links are selected
as $\alpha_{\ac{los}}=2.2,\,\alpha_{\ac{nlos}}=2.8,\,\beta_{\ac{los}}=10^{-4},\, \text{and } \beta_{\ac{nlos}}=10^{-4}$ according to an urban micro scenario in \cite{haneda20165g}. 
The UAV's path originates at ${\bf v}_{\text{I}}=(300,300,80)~\text{m}$ and terminates
at ${\bf v}_{\text{F}}=(1500,1500,80)~\text{m}$.
The cellular network consists of $K=25$ BSs which are randomly scattered over the city. All the BSs have the same height $h_g=20~\text{m}$ and we assume that the UAV flies with the fixed altitude $h=80~\text{m}$.
Fig. \ref{fig:Top_VIEW_CITY} illustrates BSs and the coverage map where the highlighted regions represent the areas where the minimum SNR constraint \eqref{eq:SNR_THR_constraint} is satisfied.

\begin{figure}[t]
\begin{centering}
\includegraphics[width=1\columnwidth]{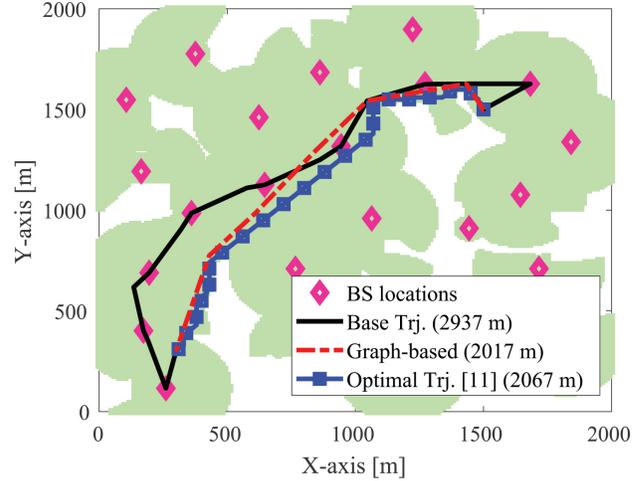}
\par\end{centering}
\caption{Top view of the city, BS locations, the generated trajectories and its lengths for different algorithms. The coverage area of each BS is highlighted with green color. \label{fig:Top_VIEW_CITY}}
\end{figure}

The base trajectory and the optimized trajectory described in Sections \ref{sec:Feasibility} and \ref{sec:trajectory_Optimization}
are shown in Fig. \ref{fig:Top_VIEW_CITY}. We have compared our method to
the other graph based approaches 
proposed  in \cite{zhang2019radio} where
the whole map within the flying area needs to be quantified into grids.
We consider the quantization unit to be 
$10\times10\, \text{m}^2$ which results
in total $\Delta ^ 2 = 4\times10^{4}$ number of nodes in the graph.
It can be seen from Fig. \ref{fig:Top_VIEW_CITY} that our method provides the best solution in terms of the path length.
The base trajectory has the maximum length among all the solutions as it is forced to visit BSs along its way to the destination.

\begin{figure}[t]
\begin{centering}
\includegraphics[width=1\columnwidth]{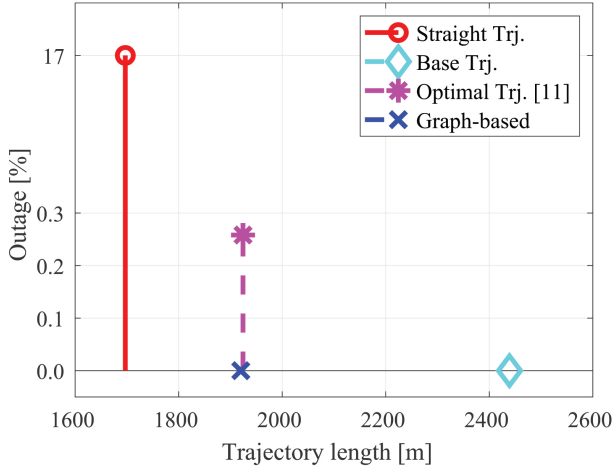}
\par\end{centering}
\caption{Outage versus the trajectory length for different algorithms. \label{fig:outage}}
\end{figure}

In Fig. \ref{fig:outage}, we evaluate the performance of the different approaches in terms of the outage over  1000  Monte-Carlo  simulations with different BS locations.
The outage is defined as the amount of time the SNR constraint
in \eqref{eq:SNR_THR_constraint} is not satisfied while following the devised trajectory.
The outage of the straight trajectory between the starting and the terminal points is illustrated as well. It can be seen that constraint \eqref{eq:SNR_THR_constraint} is always guaranteed when the UAV moves along our proposed trajectories while there is no hard guarantee for the other approaches. In general, our graph-based trajectory performs better than the other methods.

Finally, we compare the complexity of our proposed algorithms. 
Our approach which requires only discretizing the coverage border of each BS
into $Q$ samples (ref Sec. \ref{sec:Feasibility}) which are later used as nodes in the graph. An upper bound on the complexity of our graph-based algorithm is given by $\mathcal{O}(| {\mathcal{U}^b}|Q^2+KQ\log KQ)$.
It is shown that the complexity of the optimal algorithm introduced in \cite{zhang2019radio} is given by $\mathcal{O}\left( K\Delta^2+\Delta^2\log \Delta\right)$, where $\Delta$ relates to the quantization of  
the map. In this simulation we assumed 
grid size to be $10\times10\, \text{m}^2$ which resulted
in total $\Delta ^ 2 = 4\times10^{4}$  number of nodes.
It is clear that the complexity of our proposed algorithms are considerably less than the method in \cite{zhang2019radio}, since $Q\ll D$. Moreover, the complexity of our algorithm just increases with the number BSs
rather then the size of the flying area, since $Q$ does not change by increasing the size of the flying area.

\section{Conclusion}
This study investigated the problem of UAV trajectory design under cellular connectivity constraint to minimize its trajectory length between a pre-determined initial location and a given destination point in an urban environment. We proposed a novel approach to trajectory design that strikes a trade-off between performance (i.e. path length reduction) and complexity by exploiting the 3D map of the environment and employing the graph theory.  We established a graph theory based framework to first evaluated the feasibility of the problem and then to obtain a high-quality approximate solution to the UAV trajectory design problem. The performance of the proposed solutions was validated with a set of Monte-Carlo simulations.

\bibliographystyle{IEEEtran}

\bibliography{TrajectoryDesign.bib}

\end{document}